\documentclass{article}
\usepackage[margin=1in]{geometry}
\usepackage{times}
\usepackage[round,authoryear]{natbib}
\usepackage[dvipsnames]{xcolor}
\usepackage{amsmath,amssymb,amsthm,mathtools}
\usepackage{booktabs}
\usepackage{graphicx}
\usepackage{microtype}
\usepackage{enumitem}
\usepackage{array}
\usepackage{multirow}
\usepackage{url}
\usepackage[colorlinks=true,allcolors=blue!55!black]{hyperref}
\usepackage[nameinlink,capitalise,noabbrev]{cleveref}
\usepackage{xspace}

\newtheorem{theorem}{Theorem}
\newtheorem{lemma}[theorem]{Lemma}
\newtheorem{corollary}[theorem]{Corollary}

\theoremstyle{definition}

\newtheorem{remark}[theorem]{Remark}

\newcommand{\relu}{\operatorname{ReLU}}

\newcommand{\TV}{d_{\mathrm{TV}}}
\newcommand{\EE}{\mathbb{E}}
\newcommand{\PP}{\mathbb{P}}
\newcommand{\RR}{\mathbb{R}}
\newcommand{\wt}{\widetilde}
\newcommand{\EXACT}{\textsc{Exact-ReLU-Verify}\xspace}
\newcommand{\one}{\mathbf{1}}
\newcommand{\poly}{\operatorname{poly}}
\newcommand{\unsat}{\operatorname{unsat}}

\hypersetup{
  pdftitle={Random Parameter Noise Does Not Make Exact ReLU Verification Easy},
  pdfauthor={Mojtaba Soltanalian},
  pdfsubject={Smoothed hardness of exact ReLU verification}
}

\title{Random Parameter Noise Does Not Make \\ Exact ReLU Verification Easy}
\author{Mojtaba Soltanalian\\
\small Department of Electrical and Computer Engineering\\
\small University of Illinois Chicago\\
\small \texttt{msol@uic.edu}}

\date{July 2026}

\begin{document}
\maketitle

\begin{abstract}
This technical report studies exact verification of ReLU networks in an adversarial smoothed model.  Every network weight and bias is independently perturbed by Gaussian noise, clipped to $[-2,2]$, and rounded to the exact dyadic grid determined by the input bit complexity.  We show that, under the standard assumption $\mathrm{NP}\not\subseteq\mathrm{BPP}$, there is no sound and complete verifier whose expected running time is polynomial in network size, bit complexity, and inverse noise level for every base instance.  The conclusion already holds at the fixed noise level $\sigma_\star=2^{-11}$ for one-hidden-layer networks over a unit box, with hidden fan-in at most three and base coefficients in $[-1,1]$.

The proof combines an exact gap embedding with a quantitative robustness argument.  A four-ReLU-per-clause construction gives
\[
\max_{x\in[0,1]^n}g_\Phi(x)=\frac{m-\unsat(\Phi)}{3}
\]
for every E3SAT formula $\Phi$ with $m$ clauses; coordinatewise threshold rounding never decreases the objective.  A weighted parameter-sensitivity inequality and Gaussian concentration then show that a verification gap linear in $m$ survives the aggregate perturbation of all coefficients with probability at least $1-e^{-m/8}$.  The proof includes clipping, exact dyadic rounding, output-layer perturbations, polynomial-bit sampling of the rounded Gaussian law, and the conversion from expected smoothed running time to a BPP algorithm.  Computational checks test the exact identity and illustrate the different scaling of extensive and constant gaps; they are diagnostics rather than evidence for the complexity theorem.  The result concerns worst-case base networks in the stated absolute-noise model, but it shows that parameter nondegeneracy alone does not yield a universal smoothed-polynomial guarantee for exact verification.
\end{abstract}

\section{Introduction}

Complete neural-network verification requires a decisive answer: either produce an input violating a specification or prove that no such input exists.  For piecewise-linear networks, this task underlies SMT, mixed-integer, and branch-and-bound verifiers \citep{katz2017reluplex,ehlers2017formal,tjeng2019mip,bunel2020branch,wang2021betacrown}.  Verification competitions document continuing algorithmic progress together with substantial variation in difficulty across benchmarks \citep{kaulen2025vnncomp}.

Worst-case hardness does not by itself determine what happens after perturbation.  One possible explanation for difficult instances is that they depend on exact algebraic alignments, vanishing activation margins, or synchronized coefficients.  Under that explanation, generic parameter noise could remove the relevant structure and make exact verification efficient on every smoothed adversarial instance.

Smoothed analysis formalizes this question.  An adversary first chooses an instance; each numerical parameter is then independently perturbed; complexity is averaged only over the perturbation \citep{spielman2004smoothed}.  This framework explains the practical performance of the simplex method and has revealed both positive and negative phenomena in discrete optimization \citep{roeglin2007integer}.  Neural-network verification adds several requirements: the algorithm must establish a universal inequality over a continuum, it must be correct on every rounded rational instance, and all arithmetic must be charged in the ordinary bit model.

We study the perturb-and-round model in which each network coefficient $q$ becomes
\[
\wt q=\operatorname{clip}_{[-2,2]}
\left(2^{-B}\operatorname{round}\bigl(2^B(q+\sigma Z_q)\bigr)\right),
\qquad Z_q\sim\mathcal N(0,1),
\]
independently.  The input box and output threshold are unchanged.  We prove a conditional lower bound against a universal smoothed-polynomial exact verifier under the standard assumption $\mathrm{NP}\not\subseteq\mathrm{BPP}$.

\paragraph{Quantitative step beyond worst-case hardness.}
A worst-case reduction is not automatically robust to parameter noise.  A constant output gap by itself may be overwhelmed by the aggregate motion of linearly many coefficients.  We therefore use a perfect-completeness gap version of E3SAT, obtain a verification margin proportional to the number of clauses, and compare it with the weighted influence of every perturbed parameter.  Concentration then gives exponentially small label-change probability at a noise level independent of network size.  The argument is a quantitative robustification of gap hardness rather than a claim that the particular clause gadget is the only route to such a result.

\paragraph{What is established.}
\begin{enumerate}[leftmargin=1.55em,itemsep=2pt,topsep=3pt]
\item \textbf{An exact continuous extension of E3SAT.}  A formula with $m$ clauses is represented by a one-hidden-layer network with four ReLUs per clause.  Every fractional point is dominated by its coordinatewise threshold rounding, so the ReLU maximization problem has zero integrality gap and an exact closed form in terms of the minimum number of unsatisfied clauses.

\item \textbf{A weighted smoothed-transfer theorem.}  We prove a deterministic bound of the form
\[
\sup_x |g_{\theta'}(x)-g_\theta(x)|
\le \sum_q c_q|\theta'_q-\theta_q|,
\]
where $c_q$ is the worst-case influence of coefficient $q$.  Gaussian concentration for the weighted folded-normal sum yields an explicit label-preservation probability.  The argument handles clipping, exact dyadic rounding, output-layer perturbations, ReLU boundaries, and the full input box.

\item \textbf{A fixed-noise hardness consequence in the bit model.}  With $\sigma_\star=2^{-11}$, the perturbed instance has the same safe/unsafe answer as the source gap formula with probability at least $1-e^{-m/8}$.  We give a polynomial-bit sampler for the rounded Gaussian law and convert the hypothesized expected running-time bound into a BPP algorithm by Markov truncation.

\item \textbf{Computational checks and reproducibility.}  We compare the exact identity with independent MILP optima, globally optimize perturbed three-input networks by activation-arrangement enumeration, cross-check that solver against MILP, and illustrate the different scaling of extensive and one-defect gaps.  These computations audit implementation-sensitive claims and visualize the proof mechanism; they do not establish the complexity result.
\end{enumerate}

The transfer mechanism is intentionally general: other bounded-coefficient gap reductions with an extensive margin and linear weighted sensitivity can support analogous analyses.  The E3SAT construction is used because it gives an exact identity and transparent coefficient accounting.  The theorem does not claim that typical trained networks are usually hard, nor that perturbations cannot help a particular verifier.  It rules out a guarantee that exact verification becomes smoothed-polynomial for \emph{every} adversarial base network under the stated coefficient perturbation.

\section{Model and main result}\label{sec:model}

A feedforward ReLU network is a finite directed acyclic graph.  Each hidden node computes
\[
z_v=b_v+\sum_{u\to v}w_{vu}y_u,
\qquad y_v=\relu(z_v)=\max\{0,z_v\},
\]
and the scalar output $g_N(x)$ is affine in the final hidden layer.  All weights, biases, box endpoints, and thresholds are rational.

For a rational box $X$, define
\[
\EXACT(N,X,\tau)=\one\{g_N(x)\le \tau\text{ for every }x\in X\}.
\]
An unsafe answer therefore needs a witness $x$ with $g_N(x)>\tau$; a safe answer establishes the universal weak inequality.  The instance size $s$ is the number of neurons plus the number of nonzero affine coefficients.  Its bit complexity $B$ is the largest binary encoding length of any rational datum.  Our construction also works under the narrower convention in which $B$ is computed only from network coefficients.

Let $h=2^{-B}$.  For every explicitly present network weight or bias $q$, independently draw $Z_q\sim\mathcal N(0,1)$ and set
\begin{equation}
\wt q=\Pi_{[-2,2]}
\left(h\operatorname{round}\left(\frac{q+\sigma Z_q}{h}\right)\right),
\label{eq:smoothing}
\end{equation}
where $\Pi_{[-2,2]}$ is clipping and $h\le\sigma\le1$.  Zero hidden biases and the affine output bias are included as parameter slots.  If a convention perturbs fewer slots, every robustness bound below only improves.

\begin{theorem}[Fixed-noise smoothed hardness]\label{thm:main}
Let
\[
\sigma_\star=2^{-11}.
\]
Suppose there is a sound and complete exact algorithm $A$ whose expected bit-running time on \eqref{eq:smoothing} is polynomial in $s$, $B$, and $1/\sigma_\star$, for every adversarial base instance satisfying $2^{-B}\le\sigma_\star$.  Then
\[
\mathrm{NP}\subseteq\mathrm{BPP}.
\]
The implication already holds for one-hidden-layer networks over $[0,1]^{n+1}$, hidden fan-in at most three, base coefficients in $[-1,1]$, $4m+1$ hidden ReLUs, at most $14m+4$ perturbed parameter slots, $s=\Theta(m)$, and $20\le B=O(\log(m+2))$.
\end{theorem}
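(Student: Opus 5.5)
The plan is to reduce from a perfect-completeness gap version of E3SAT. By the PCP theorem there is a constant $\varepsilon_0>0$ such that it is NP-hard to distinguish satisfiable E3SAT formulas ($\unsat(\Phi)=0$) from formulas with $\unsat(\Phi)\ge\varepsilon_0 m$. Given $\Phi$ with $m$ clauses over $n$ variables, I build the one-hidden-layer network $g_\Phi$ over $[0,1]^{n+1}$ with four ReLUs per clause and one extra ReLU, giving $4m+1$ hidden units. The extra coordinate and extra ReLU serve as a bias/normalization slot. The construction should satisfy $\max_x g_\Phi(x)=(m-\unsat(\Phi))/3$, as promised in the abstract.

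\textbf{Step 1: the exact identity.} Each clause gadget outputs $1/3$ on Boolean points satisfying the clause and $0$ otherwise. I would show the gadget is coordinatewise monotone-dominated by threshold rounding at $1/2$: rounding any fractional coordinate never decreases any clause term. Then the maximum over the box equals the maximum over $\{0,1\}^n$, which gives the identity. Choosing threshold $\tau=m/3-\varepsilon_0 m/6$ makes satisfiable formulas unsafe with margin $\varepsilon_0 m/6$ and gap formulas safe with the same margin. All coefficients lie in $[-1,1]$, hidden fan-in is at most three, and they are dyadic with $B=O(\log(m+2))$. I take $B\ge 20$ so that $h=2^{-B}\le\sigma_\star$.

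\textbf{Step 2: deterministic sensitivity.} For perturbed parameters $\theta'$, I will bound $\sup_{x\in[0,1]^{n+1}}|g_{\theta'}(x)-g_\theta(x)|\le\sum_q c_q|\theta'_q-\theta_q|$. The argument uses that ReLU is 1-Lipschitz and that inputs lie in $[0,1]$. A hidden weight or bias then moves its pre-activation by at most $|\Delta q|$ times an input bounded by 1. That change propagates through an output weight bounded by $2$ after clipping. Output weights and the output bias contribute at most $|\Delta q|$ times a hidden activation bounded by a constant, such as 4 for fan-in three plus a bias. So every $c_q=O(1)$, and there are at most $14m+4$ slots. Clipping to $[-2,2]$ never increases $|\theta'_q-\theta_q|$ beyond $|\sigma Z_q|+h$, since base coefficients lie in $[-1,1]$. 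Rounding adds at most $h/2\le\sigma_\star$.

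\textbf{Step 3: concentration and label preservation.} Since $\sum_q c_q(\sigma|Z_q|+h)$ is a Lipschitz function of the Gaussian vector with Lipschitz constant $\sigma\|c\|_2=O(\sigma\sqrt m)$, its mean is $O(\sigma m)$. With $\sigma_\star=2^{-11}$ the mean is below half the margin $\varepsilon_0 m/6$, provided the constants work out. This is the step I expect to be tightest: $\varepsilon_0$ and $c_q$ must be explicit enough that $2^{-11}$ suffices. Gaussian concentration then gives deviation probability $\exp(-\Omega(m^2/m))$. Calibrating the constants should yield $\ge 1-e^{-m/8}$.

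\textbf{Step 4: the BPP algorithm.} Given $\Phi$, build the base instance and sample the perturbed instance exactly. Because the rounded law lives on a grid, I sample the integer $\operatorname{round}((q+\sigma Z)/h)$ using polynomially many random bits. Tails beyond $O(\sqrt m)$ standard deviations are merged into the clipped endpoints, and the total-variation error is $e^{-\Omega(m)}$. Next, run $A$ for $3T$ steps, where $T=\poly(s,B,1/\sigma_\star)$ is its expected time. By Markov's inequality, $A$ halts within that budget with probability $\ge2/3$, and when it halts it is correct on the perturbed instance by soundness and completeness. Output "satisfiable" exactly when $A$ answers unsafe. The total error is at most $1/3+e^{-m/8}+e^{-\Omega(m)}$, which is bounded below $1/2$ for large $m$ and can be amplified by repetition. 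This decides gap-E3SAT in BPP, so $\mathrm{NP}\subseteq\mathrm{BPP}$.
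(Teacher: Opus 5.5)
Your architecture matches the paper's: gap E3SAT, an exact continuous identity, weighted sensitivity, Gaussian concentration of $\sum_q c_q|Z_q|$, then approximate sampling with Markov truncation. Steps 2 and 4 are essentially sound.

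The first genuine gap is Step 1. You never specify the four ReLUs, yet the identity $\max_x g_\Phi(x)=(m-\unsat(\Phi))/3$ is the substance of the reduction. The natural clause reward $\min\{L_C,\frac13\}$, with $L_C$ the average of the three relaxed literals, fails outright on its own. At $x\equiv\frac12$ every clause earns $\frac13$, so the maximum would be $m/3$ for every formula.

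The missing idea is the following. Add, for each literal occurrence, a unit $\frac23\relu(\ell_\lambda(x)-\frac12)$, alongside one overflow unit $\relu(L_C-\frac13)$ with output weight $-1$. Since $\frac23\relu(\ell_\lambda-\frac12)=\frac13|x_i-\frac12|+\frac13\ell_\lambda-\frac16$, these units supply both the affine reward and a distance-to-Boolean penalty. One gets $g_\Phi(x)=\sum_C\min\{L_C(x),\frac13\}-\frac13\sum_i d_i\delta_i$, where $\delta_i=\min\{x_i,1-x_i\}$ and $d_i$ is the number of occurrences of $X_i$.

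With this gadget your coordinatewise rounding claim is true, and it is a clean alternative to the paper's global accounting. Rounding $x_i$ raises the penalty part of clause $C$ by $k\delta_i/3$, where $k$ counts occurrences of $X_i$ in $C$, and lowers $\min\{L_C,\frac13\}$ by at most $k\delta_i/3$. Without the gadget, though, Step 1 only restates what is to be proved.

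The second gap is quantitative, and it is exactly where the stated value $\sigma_\star=2^{-11}$ lives; ``provided the constants work out'' does not settle it. With an unspecified PCP constant $\varepsilon_0$ the claim is false in general. Even with the paper's sharp influences ($C\approx\frac{133}{6}m$), the condition $\Gamma/\sigma_\star>\mu C$ reads $2048\,\varepsilon_0/6>\sqrt{2/\pi}\cdot\frac{133}{6}$. That forces $\varepsilon_0$ above roughly $0.05$, before leaving any room for concentration.

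You need two ingredients:
\begin{itemize}
\item H\aa stad's $7/8+\epsilon$ hardness; the paper takes $\epsilon=1/40$, so $\varepsilon_0=\frac1{10}$ and $\Gamma=m/60$.
\item Output-weight influences taken from the base activation bounds $\frac12$ and $\frac23$, rather than a generic constant.
\end{itemize}
Then $C\le23m$, $D^2\le43m$, $\Lambda\ge52m/15$, and the failure probability is at most $e^{-m/8}$.

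Relatedly, $B$ is not a parameter you can ``take.'' It is the encoding length of the instance, and the hypothesis on $A$ only covers instances with $2^{-B}\le\sigma_\star$. With constant-size coefficients you must force it. The paper does this with an inactive guard unit $\relu(-1+2^{-20}x_0)$ on the extra input, whose base output is identically $0$. Its three slots are charged in the sensitivity sum, with output-weight influence $0$.

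This is not cosmetic for your constants. With output influence $4$ you have $C\approx36m$, and even with $\Gamma=m/60$ the condition $\Gamma>\sigma_\star\mu C+hC$ holds only when $h$ is a small fraction of $\sigma_\star$.
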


The extra hidden ReLU is an inactive precision guard.  The semantic SAT construction uses exactly four ReLUs per clause.

\section{An exact four-ReLU-per-clause E3SAT embedding}\label{sec:identity}

We use the following standard consequence of H\aa stad's optimal E3SAT inapproximability theorem \citep{hastad2001optimal}.  Choosing approximation parameter $\epsilon=1/40$, it is NP-hard to distinguish formulas for which all clauses can be satisfied from formulas for which every assignment satisfies at most a $9/10$ fraction.  Equivalently, in the no case every assignment violates at least
\begin{equation}
\eta m,\qquad \eta=\frac{1}{10},
\label{eq:eta}
\end{equation}
clauses.  Taking 64 disjoint copies is a constant-factor padding that preserves both alternatives and ensures $m\ge64$.

For a literal $\lambda$, define its relaxed value
\[
\ell_\lambda(x)=
\begin{cases}
 x_i,&\lambda=X_i,\\
 1-x_i,&\lambda=\neg X_i.
\end{cases}
\]
For each occurrence $\lambda_j$ in a clause $C$, create a literal unit
\begin{equation}
h_{C,j}(x)=\relu\left(\ell_{\lambda_j}(x)-\frac12\right)
\quad\text{with output weight }\frac23.
\label{eq:literal-unit}
\end{equation}
For $C=(\lambda_1\vee\lambda_2\vee\lambda_3)$, let
\[
L_C(x)=\frac{\ell_{\lambda_1}(x)+\ell_{\lambda_2}(x)+\ell_{\lambda_3}(x)}{3}
\]
and create one overflow unit
\begin{equation}
r_C(x)=\relu\left(L_C(x)-\frac13\right)
\quad\text{with output weight }-1.
\label{eq:overflow-unit}
\end{equation}
The network output is
\begin{equation}
g_\Phi(x)=\frac23\sum_{C\in\Phi}\sum_{j=1}^3 h_{C,j}(x)
-\sum_{C\in\Phi}r_C(x).
\label{eq:network}
\end{equation}
Every hidden fan-in is at most three.  Literal-unit weights and biases are $\pm1$ and $\pm1/2$; overflow weights are integer multiples of $1/3$ in $[-1,1]$, overflow biases belong to $\{-1/3,0,1/3,2/3\}$, and output weights are $2/3$ or $-1$.

Let $u(a)$ be the number of clauses false under Boolean assignment $a$, and let
\[
u^\star(\Phi)=\min_{a\in\{0,1\}^n}u(a).
\]

\begin{theorem}[Exact integrality and threshold rounding]\label{thm:identity}
For every E3SAT formula $\Phi$ with $m$ clauses,
\begin{equation}
\max_{x\in[0,1]^n}g_\Phi(x)=\frac{m-u^\star(\Phi)}{3}.
\label{eq:identity}
\end{equation}
More strongly, if $a_i=\one\{x_i\ge1/2\}$, then
\[
g_\Phi(x)\le g_\Phi(a).
\]
Thus every maximum has a Boolean representative.
\end{theorem}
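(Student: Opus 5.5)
The plan is to prove the pointwise domination $g_\Phi(x)\le g_\Phi(a)$ clause by clause. The identity \eqref{eq:identity} then follows by also evaluating $g_\Phi$ exactly at Boolean points. Write $g_\Phi=\sum_C f_C$ with
\[
f_C(x)=\frac23\sum_{j=1}^3\relu\left(t_j-\frac12\right)-\relu\left(\frac{t_1+t_2+t_3}{3}-\frac13\right),\qquad t_j=\ell_{\lambda_j}(x)\in[0,1].
\]

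\textbf{Step 1 (Boolean evaluation).} At a Boolean point, suppose $k$ of the three literals of $C$ are true. Each true literal unit contributes $\frac23\cdot\frac12=\frac13$, and each false one contributes $0$. The overflow unit equals $\relu((k-1)/3)$. Hence $f_C(a)=\frac k3-\frac{k-1}{3}=\frac13$ when $k\ge1$, and $f_C(a)=0$ when $k=0$. Summing over clauses gives $g_\Phi(a)=(m-u(a))/3$. Maximizing over $a$ shows that the right-hand side of \eqref{eq:identity} is attained.

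\textbf{Step 2 (clauses false under the rounding).} Let $a_i=\one\{x_i\ge1/2\}$. If $C$ is false under $a$, then every literal of $C$ has $t_j\le\frac12$. Indeed, a false positive literal means $x_i<\frac12$. A false negative literal means $x_i\ge\frac12$, so $t_j=1-x_i\le\frac12$. In either case every literal unit vanishes at $x$. The overflow term is nonpositive, so $f_C(x)\le0=f_C(a)$. The only delicate point is the tie $x_i=\frac12$, which is asymmetric for negated literals. I would handle it explicitly as above: the literal unit is zero there regardless of how the tie is rounded, so it causes no harm.

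\textbf{Step 3 (clauses satisfied under the rounding).} Here $f_C(a)=\frac13$, so it suffices to show $f_C\le\frac13$ on all of $[0,1]^3$. I would lower-bound the overflow ReLU by its argument, using $\relu(z)\ge z$. This gives
\[
f_C(x)\le\frac13+\sum_{j=1}^3\left[\frac23\relu\left(t_j-\frac12\right)-\frac{t_j}{3}\right].
\]
Each bracket is nonpositive. For $t_j\le\frac12$ it equals $-t_j/3\le0$. For $t_j>\frac12$ it equals $(t_j-1)/3\le0$ because $t_j\le1$.

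\textbf{Conclusion.} Summing over clauses gives $g_\Phi(x)\le g_\Phi(a)=(m-u(a))/3\le(m-u^\star(\Phi))/3$. Combined with Step 1, this proves \eqref{eq:identity} and shows that every maximizer has a Boolean representative. I expect no real obstacle. The step needing most care is Step 2's tie-breaking bookkeeping for negated literals, together with using the box constraint $t_j\le1$ in Step 3.
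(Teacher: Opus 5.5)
Your proof is correct, but you organize it clause by clause, whereas the paper argues globally. The two routes rest on the same identity.

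The paper applies \eqref{eq:fusion} to all $3m$ literal units at once. It regroups $g_\Phi$ as a variable-wise fractionality penalty $-V/3$, with $V=\sum_i d_i\delta_i$, plus capped clause rewards $\min\{L_C,1/3\}$, as in \eqref{eq:fused-objective}. A clause falsified by the rounding can still collect a positive capped reward at a fractional point. The paper therefore has to charge those rewards against the global penalty through the counting inequality $\sum_{C\in F(a)}D_C\le V$.

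You never detach the literal units from their clause. For a falsified clause, all three literal units already vanish, so $f_C(x)\le 0$ directly. For the other clauses, your bracket equals $-\min\{t_j,1-t_j\}/3$, which is \eqref{eq:fusion} used locally, and this gives $f_C\le 1/3$ on the whole box.

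Take $D_C$ to be the sum of the three occurrence distances, now for every clause. Your computation then shows $f_C=\min\{L_C,1/3\}-D_C/3$, which sums over $C$ to \eqref{eq:fused-objective}. Keeping the slack in your Step 3 gives $m/3-g_\Phi(x)\ge u(a)/3+\frac13\sum_{C\notin F(a)}D_C$. That is exactly the bound the paper's chain reaches before its last step, so neither route is quantitatively stronger.

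Your route buys modularity. It needs no occurrence counts $d_i$, no use of $\sum_i d_i=3m$, and no double counting. It gives a per-gadget certificate ($f_C\le 1/3$ everywhere, and $f_C\le 0$ whenever rounding falsifies $C$) that carries over verbatim to weighted clauses. It also makes the tie at $x_i=1/2$ visibly harmless.

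The paper's route buys an exact closed form for $g_\Phi$ that explains the design: literal units penalize fractional coordinates, and overflow units cap each clause's reward at $1/3$. The price is the extra global accounting step.
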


\begin{proof}
Write $\delta_i=\min\{x_i,1-x_i\}$ and let $d_i$ be the number of literal occurrences of variable $X_i$.  The elementary identity
\begin{equation}
\frac23\relu\left(\ell_\lambda(x)-\frac12\right)
=\frac13\left|x_i-\frac12\right|+\frac{\ell_\lambda(x)}{3}-\frac16
\label{eq:fusion}
\end{equation}
uses $|\ell_\lambda(x)-1/2|=|x_i-1/2|$.  Summing \eqref{eq:fusion} over all $3m$ occurrences and using
$L_C-\relu(L_C-1/3)=\min\{L_C,1/3\}$ gives
\begin{equation}
g_\Phi(x)=
\frac13\sum_i d_i\left|x_i-\frac12\right|-\frac{m}{2}
+\sum_{C\in\Phi}\min\left\{L_C(x),\frac13\right\}.
\label{eq:fused-objective}
\end{equation}
Set $V=\sum_i d_i\delta_i$.  Since $\sum_i d_i=3m$, the first two terms in \eqref{eq:fused-objective} equal $-V/3$.

Let $F(a)$ be the set of clauses false under the rounded assignment $a$, and put $u=|F(a)|$.  For $C\in F(a)$, every relaxed literal value equals the corresponding $\delta_i$.  If $D_C$ is the sum of the three occurrence distances, then
\[
\frac13-\min\left\{\frac{D_C}{3},\frac13\right\}
=\frac13\max\{0,1-D_C\}
\ge\frac{1-D_C}{3}.
\]
The reward deficits of clauses satisfied by $a$ are nonnegative.  Therefore
\begin{align*}
\frac m3-g_\Phi(x)
&\ge \frac V3+\frac u3-\frac13\sum_{C\in F(a)}D_C\\
&\ge \frac u3,
\end{align*}
where the final inequality holds because false-clause occurrences are a subset of all literal occurrences, so $\sum_{C\in F(a)}D_C\le V$.

At the Boolean point $a$, each satisfied clause contributes exactly $1/3$ and each false clause contributes zero; hence $g_\Phi(a)=(m-u)/3$.  The displayed inequality is exactly $g_\Phi(x)\le g_\Phi(a)$.  Maximizing over $x$ and choosing an assignment minimizing $u$ proves \eqref{eq:identity}.
\end{proof}

The exact identity yields an extensive symmetric verification gap.  Set
\begin{equation}
T=\frac m3,
\qquad
\tau=T-\frac{\eta m}{6}=\frac{19m}{60},
\qquad
\Gamma=\frac{\eta m}{6}=\frac m{60}.
\label{eq:threshold-margin}
\end{equation}
If $\Phi$ is satisfiable, some Boolean input has $g_\Phi=T=\tau+\Gamma$.  In the no case, \cref{thm:identity} gives
\[
\max_xg_\Phi(x)\le T-\frac{\eta m}{3}=\tau-\Gamma.
\]
The two alternatives are separated from the unchanged threshold by the same margin $\Gamma$.

\section{Weighted robustness under perturbation}\label{sec:robustness}

\subsection{The precision guard}

The smoothing model requires $2^{-B}\le\sigma$.  To make the fixed $\sigma_\star$ admissible under any natural convention for $B$, add one input $x_0\in[0,1]$ and one hidden unit
\begin{equation}
d(x_0)=\relu\left(-1+2^{-20}x_0\right)
\quad\text{with output weight }1.
\label{eq:guard}
\end{equation}
Its base output is identically zero, so it does not change \eqref{eq:identity} or the threshold gap.  Its coefficient $2^{-20}$ ensures $B\ge20$ and therefore
\[
2^{-B}\le2^{-20}<2^{-11}=\sigma_\star.
\]
The guard's perturbation is included in all bounds below.

\subsection{A deterministic weighted sensitivity inequality}

Let $\theta$ denote the vector of all network parameter slots, including hidden biases, output weights, and the output bias.  For each slot $q$, define a nonnegative influence coefficient $c_q$:
\begin{center}
\begin{tabular}{lc}
\toprule
parameter class & influence $c_q$\\
\midrule
hidden input weight or hidden bias & $2$\\
literal-unit output weight & $1/2$\\
overflow-unit output weight & $2/3$\\
precision-guard output weight & $0$\\
output bias & $1$\\
\bottomrule
\end{tabular}
\end{center}

\begin{lemma}[Uniform weighted sensitivity]\label{lem:sensitivity}
For every clipped coefficient vector $\theta'\in[-2,2]^P$ on the same graph,
\begin{equation}
\sup_{x\in[0,1]^{n+1}}
|g_{\theta'}(x)-g_\theta(x)|
\le \sum_{q=1}^P c_q|\theta'_q-\theta_q|.
\label{eq:weighted-sensitivity}
\end{equation}
The inequality is valid without fixing an activation pattern and remains valid when a base or perturbed preactivation equals zero.
\end{lemma}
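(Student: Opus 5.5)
The plan is a direct term-by-term comparison that uses only the one-hidden-layer structure of the network and the $1$-Lipschitz property of $\relu$. No activation pattern is ever fixed, so ties at zero preactivation need no separate treatment. Index the hidden units by $k$ (literal units, overflow units, and the guard). Write the base network as
\[
g_\theta(x)=\sum_k a_k\,\relu\bigl(z_k(x)\bigr)+c_0,
\qquad
z_k(x)=b_k+\sum_{i\in S_k}w_{ki}x_i,
\]
and the perturbed network with primes. Here $S_k$ is the fan-in set of unit $k$, of size at most three. All slots of $\theta'$ lie in $[-2,2]$ by clipping.

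\textbf{Step 1 (hybrid decomposition).} For each unit I would write the mixed telescoping identity
\[
a'_k\relu(z'_k)-a_k\relu(z_k)=(a'_k-a_k)\,\relu(z_k)+a'_k\bigl(\relu(z'_k)-\relu(z_k)\bigr).
\]
The choice of which factor is base and which is perturbed matters. The output-weight change must be multiplied by the \emph{base} activation, whose range over the box is known exactly. The hidden-parameter change must be multiplied by the \emph{perturbed} output weight, which is bounded only through clipping. Summing over $k$ and adding $c'_0-c_0$ then reduces \eqref{eq:weighted-sensitivity} to bounding each of the two pieces uniformly in $x$.

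\textbf{Step 2 (hidden parameters, influence $2$).} I would use $|\relu(s)-\relu(t)|\le|s-t|$ for all reals $s,t$, including when either argument is zero. I would also use $x\in[0,1]^{n+1}$. Together these give
\[
|z'_k(x)-z_k(x)|\le|b'_k-b_k|+\sum_{i\in S_k}|w'_{ki}-w_{ki}|.
\]
Multiplying by $|a'_k|\le2$ gives influence $2$ for every hidden input weight and hidden bias.

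\textbf{Step 3 (output parameters).} I would bound $\sup_{x}\relu(z_k(x))$ for each base unit.
\begin{itemize}[leftmargin=1.55em,itemsep=2pt,topsep=3pt]
\item A literal unit is $\relu(\ell_\lambda-\tfrac12)$ with $\ell_\lambda\in[0,1]$, so it is at most $\tfrac12$.
\item An overflow unit is $\relu(L_C-\tfrac13)$ with $L_C\in[0,1]$, so it is at most $\tfrac23$.
\item The guard is $\relu(-1+2^{-20}x_0)$, which equals $0$ identically on the box.
\item The output bias enters with coefficient $1$.
\end{itemize}
These are exactly the tabulated $c_q$. Taking the supremum over $x$ of the triangle-inequality bound completes the argument.

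The main obstacle is only the bookkeeping in Step 1. A symmetric split would multiply the output-weight changes by perturbed activations, which are not a priori bounded by $\tfrac12$ or $\tfrac23$. It would also give the guard a nonzero influence. The asymmetric split avoids both problems. Beyond that, one must check that every explicitly present slot, including zero hidden biases, falls in exactly one row of the table.
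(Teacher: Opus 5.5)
Your proposal is correct and follows essentially the same route as the paper. The paper also writes $a'_v y'_v - a_v y_v = (a'_v-a_v)y_v + a'_v(y'_v-y_v)$. It bounds the second piece by $2e_v$ using clipping, $x\in[0,1]^{n+1}$, and the $1$-Lipschitz property of ReLU, bounds the first by the base-activation ceilings $Y_v\in\{1/2,2/3,0\}$, and then adds the output-bias error — exactly your asymmetric split and bookkeeping.
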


\begin{proof}
For hidden unit $v$, define
\[
e_v=|b'_v-b_v|+\sum_j|w'_{vj}-w_{vj}|.
\]
Because every input coordinate lies in $[0,1]$,
$|z'_v(x)-z_v(x)|\le e_v$.  ReLU is 1-Lipschitz, so
$|y'_v(x)-y_v(x)|\le e_v$.

Let $a_v$ and $a'_v$ be the base and perturbed output weights.  Clipping gives $|a'_v|\le2$.  If $Y_v$ is a uniform upper bound on the base activation, then
\begin{align*}
|a'_vy'_v-a_vy_v|
&=|(a'_v-a_v)y_v+a'_v(y'_v-y_v)|\\
&\le Y_v|a'_v-a_v|+2e_v.
\end{align*}
For literal units $Y_v=1/2$; for overflow units $Y_v=2/3$; for the inactive precision guard $Y_v=0$.  Summing the displayed bound and adding the output-bias error gives exactly \eqref{eq:weighted-sensitivity}.
\end{proof}

Define
\begin{equation}
C=\sum_qc_q,
\qquad
D^2=\sum_qc_q^2,
\qquad
\mu=\EE|Z|=\sqrt{2/\pi}.
\label{eq:CD}
\end{equation}
The compact construction admits tight linear bounds.  Per clause, the three literal units contribute $27/2$ to $C$ and $99/4$ to $D^2$; the overflow unit contributes at most $26/3$ and $148/9$.  The precision guard and output bias add at most $5$ to $C$ and $9$ to $D^2$.  Hence, for $m\ge64$,
\begin{equation}
C\le \frac{133m}{6}+5\le23m,
\qquad
D^2\le\frac{1483m}{36}+9\le43m.
\label{eq:CD-bounds}
\end{equation}

\subsection{A reusable Gaussian transfer theorem}

\begin{theorem}[Weighted smoothed-gap transfer]\label{thm:transfer}
Suppose a verification instance has symmetric margin $\Gamma>0$ around its threshold and satisfies \eqref{eq:weighted-sensitivity}.  Let $h=2^{-B}$ and assume every base coefficient lies in $[-1,1]$.  If
\begin{equation}
\Lambda
:=\frac{\Gamma}{\sigma}
-\left(\mu+\frac{h}{2\sigma}\right)C
>0,
\label{eq:lambda}
\end{equation}
then perturbation \eqref{eq:smoothing} changes the exact safe/unsafe answer with probability at most
\begin{equation}
\exp\left(-\frac{\Lambda^2}{2D^2}\right).
\label{eq:transfer-prob}
\end{equation}
\end{theorem}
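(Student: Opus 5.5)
The plan is to reduce label preservation to a tail bound for a weighted sum of independent folded Gaussians, using \cref{lem:sensitivity} as the deterministic link.

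\textbf{Step 1: reduce to the weighted deviation.} If $\sup_x|g_{\theta'}(x)-g_\theta(x)|<\Gamma$, the exact answer cannot change. In the yes case some $x$ has $g_\theta(x)\ge\tau+\Gamma$, so $g_{\theta'}(x)>\tau$. In the no case $g_\theta\le\tau-\Gamma$ everywhere, so $g_{\theta'}<\tau$ everywhere. By \eqref{eq:weighted-sensitivity}, it therefore suffices to bound
\[
\PP\Bigl(\textstyle\sum_q c_q|\wt q-q|\ge\Gamma\Bigr).
\]

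\textbf{Step 2: remove clipping and rounding pointwise.} Since $q\in[-1,1]\subset[-2,2]$ and projection onto the interval is 1-Lipschitz, clipping cannot increase the distance to $q$. Rounding to the grid $h\mathbb Z$ moves a point by at most $h/2$. Hence, deterministically,
\[
|\wt q-q|\le|\sigma Z_q|+\tfrac h2 .
\]
The event in Step 1 is then contained in
\[
\Bigl\{\textstyle\sum_q c_q|Z_q|\ge \Gamma/\sigma - hC/(2\sigma)\Bigr\}
=\Bigl\{\textstyle\sum_q c_q(|Z_q|-\mu)\ge\Lambda\Bigr\}.
\]
The Gaussian draws enter only through this folded sum, so clipping, dyadic rounding, ties at ReLU boundaries, and output-layer slots are all absorbed by \cref{lem:sensitivity}.

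\textbf{Step 3: concentration.} The map $z\mapsto\sum_q c_q|z_q|$ is Lipschitz on $\RR^P$ in the Euclidean norm with constant $\bigl(\sum_q c_q^2\bigr)^{1/2}=D$, by Cauchy--Schwarz and $\bigl||a|-|b|\bigr|\le|a-b|$. Its mean is $\mu C$. The Gaussian concentration inequality for Lipschitz functions (Tsirelson--Ibragimov--Sudakov) then gives the upper tail
\[
\exp\!\left(-\frac{\Lambda^2}{2D^2}\right)\qquad\text{for }\Lambda>0.
\]
Alternatively, each $c_q(|Z_q|-\mu)$ is $c_q^2$-sub-Gaussian, and a Chernoff bound gives the same result. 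Combined with Step 2, this yields \eqref{eq:transfer-prob}.

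\textbf{Main obstacle.} The delicate point is the strictness at the boundary in Step 1. A deviation exactly equal to $\Gamma$ in the yes case still gives $g_{\theta'}(x)\ge\tau$, which does not certify unsafety because the definition requires $g>\tau$. This is harmless for two reasons. First, the tail bound holds for the event $\ge\Lambda$, which covers the closed event. Second, under $\Lambda>0$, the only way to land at exactly $\Gamma$ also requires the clipping and rounding slack to be attained with equality, and the sum of $|Z_q|$ has a continuous law. So the boundary case is either null or is already covered by the closed tail event.
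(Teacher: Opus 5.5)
Your proof is correct and follows the paper's argument step for step. You bound the scalar drift by $|\wt q-q|\le\sigma|Z_q|+h/2$ using nonexpansive clipping and grid rounding, reduce via \cref{lem:sensitivity} to the event $\sum_q c_q|Z_q|-\mu C\ge\Lambda$, and apply Lipschitz Gaussian concentration with constant $D$. On the boundary issue, your first reason (the closed tail event already contains a displacement of exactly $\Gamma$) is all that is needed, and it is what the paper uses. Drop your second reason: the sensitivity bound need not be tight, so a displacement of exactly $\Gamma$ does not force the clipping and rounding slack to be attained.
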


\begin{proof}
Nearest-grid rounding changes a scalar by at most $h/2$.  Projection onto $[-2,2]$ is nonexpansive relative to every base point $q\in[-1,1]$, so
\begin{equation}
|\wt q-q|\le\sigma|Z_q|+\frac h2.
\label{eq:scalar-drift}
\end{equation}
Combining \eqref{eq:scalar-drift} with \cref{lem:sensitivity} gives
\[
\sup_x|g_{\wt\theta}(x)-g_\theta(x)|
\le \sigma F(Z)+\frac{hC}{2},
\qquad
F(Z)=\sum_qc_q|Z_q|.
\]
The function $F:\RR^P\to\RR$ is $D$-Lipschitz in Euclidean norm, and $\EE F=\mu C$.  Gaussian concentration for Lipschitz functions \citep{boucheron2013concentration} implies
\[
\PP\{F-\mu C\ge t\}
\le\exp\left(-\frac{t^2}{2D^2}\right).
\]
A label change requires the uniform output displacement to reach at least $\Gamma$.  This implies $F-\mu C\ge\Lambda$, and \eqref{eq:transfer-prob} follows.
\end{proof}

\begin{remark}[Scope of the transfer theorem]\label{rem:scope}
The theorem is not specific to the E3SAT embedding.  For example, if $G=(V,E)$ is an unweighted graph, then
\[
g_G(x)=\sum_{\{i,j\}\in E}
\bigl(\relu(x_i-x_j)+\relu(x_j-x_i)\bigr)
=\sum_{\{i,j\}\in E}|x_i-x_j|
\]
has maximum $\operatorname{MaxCut}(G)$ over $[0,1]^{|V|}$: with all other coordinates fixed, the objective is convex in one coordinate, so an endpoint choice cannot decrease it, and repeated endpoint rounding yields a Boolean maximizer.  This two-ReLU-per-edge representation is closely related to existing shallow-network hardness constructions \citep{froese2025complexity}.  Combined with standard constant-factor gap hardness for MAX-CUT \citep{hastad2001optimal} and the same linear coefficient accounting, it gives a parallel route to a qualitative fixed-noise statement.  We retain the E3SAT instantiation because its exact optimum, coefficient bounds, and preservation constants are particularly direct.
\end{remark}

\subsection{Explicit fixed-noise instantiation}

For our reduction, $\Gamma=m/60$, $\sigma_\star=2^{-11}$, and $h\le2^{-20}<\sigma_\star$.  Using the coarser but convenient inequality $h/(2\sigma_\star)\le1/2$, $\mu<5/6$, and \eqref{eq:CD-bounds},
\begin{align}
\Lambda
&\ge \frac{m/60}{2^{-11}}
-\left(\frac56+\frac12\right)23m\\
&=\frac{512m}{15}-\frac{92m}{3}
=\frac{52m}{15}.
\label{eq:lambda-lower}
\end{align}
Consequently,
\begin{equation}
\frac{\Lambda^2}{2D^2}
\ge
\frac{(52m/15)^2}{86m}
=\frac{2704}{19350}m
>\frac m8.
\label{eq:exponent}
\end{equation}

\begin{corollary}[Exponential label preservation]\label{cor:preservation}
For every padded gap instance in the reduction,
\[
\PP\{\text{the perturbation changes the exact answer}\}
\le e^{-m/8}.
\]
In particular, because $m\ge64$, the preservation probability exceeds $1-e^{-8}>0.9996$.
\end{corollary}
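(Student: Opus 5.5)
The corollary follows by instantiating \cref{thm:transfer} with the constants of the padded reduction, so I would simply verify its hypotheses and then insert the explicit computation \eqref{eq:lambda-lower}--\eqref{eq:exponent}.

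First, the margin. By \cref{thm:identity} and \eqref{eq:threshold-margin}, every padded gap instance has symmetric margin $\Gamma=m/60$ around the unchanged threshold $\tau=19m/60$. The precision guard \eqref{eq:guard} is identically zero on $[0,1]^{n+1}$, so adding it changes neither the maximum nor the margin.

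Second, the remaining hypotheses of \cref{thm:transfer}.
\begin{enumerate}[leftmargin=1.55em,itemsep=2pt,topsep=3pt]
\item The construction lists every base coefficient, including $2^{-20}$, $-1$, and the zero biases, and all of them lie in $[-1,1]$.
\item \cref{lem:sensitivity} supplies \eqref{eq:weighted-sensitivity}, with the tabulated influences $c_q$.
\item $h=2^{-B}\le 2^{-20}$, so $h\le\sigma_\star\le 1$ and the smoothing model is admissible.
\end{enumerate}

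Third, the computation of $\Lambda$. I would use $\mu=\sqrt{2/\pi}<5/6$ and $h/(2\sigma_\star)\le 2^{-10}\le 1/2$. Together with $C\le 23m$ from \eqref{eq:CD-bounds}, this gives $\Lambda\ge 2^{11}m/60-(4/3)\cdot 23m=52m/15>0$.

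Fourth, the exponent. Using $D^2\le 43m$, we get $\Lambda^2/(2D^2)\ge (2704/225)m^2/(86m)=2704m/19350$. Since $2704\cdot 8=21632>19350$, this exceeds $m/8$. \cref{thm:transfer} then bounds the label-change probability by $e^{-m/8}$. For the "in particular" clause, $m\ge 64$ gives $e^{-m/8}\le e^{-8}<4\times10^{-4}$, so the preservation probability exceeds $0.9996$.

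The main step to double-check is the bound \eqref{eq:CD-bounds} itself. It rests on the per-clause accounting: the number of overflow input weights, which depends on whether two literals in a clause share a variable, and the slot counts for the guard and output bias. I would recount it directly from the table.

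I would also confirm that the transfer theorem's event is the right one. A label change requires a uniform displacement of at least $\Gamma$, and this is legitimate here because the margin holds with non-strict inequalities on both sides, consistent with the weak inequality in \EXACT.
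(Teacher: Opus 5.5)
Your proposal is correct and follows the paper's own argument: the corollary is \cref{thm:transfer} instantiated with $\Gamma=m/60$, $\sigma_\star=2^{-11}$, $h\le 2^{-20}$, $\mu<5/6$, $C\le 23m$ and $D^2\le 43m$, which gives $\Lambda\ge 52m/15$ and an exponent $2704m/19350>m/8$. Your side checks also match what the paper verifies elsewhere: all base coefficients, including the guard's, lie in $[-1,1]$; fan-in stays at most three under repeated variables; and a label change forces a uniform displacement of at least $\Gamma$.
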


This is the central smoothed-hardness mechanism.  Both the logical margin and the weighted aggregate perturbation scale linearly with $m$; a sufficiently small \emph{constant} noise level leaves a positive linear separation, and concentration makes failure exponentially unlikely.

\section{From label preservation to a BPP collapse}\label{sec:bpp}

We now complete the proof of \cref{thm:main}.  The construction is polynomial-time and has $4m+1$ hidden ReLUs.  Counting all possible sparse incoming weights, hidden biases, output weights, the precision guard, and the output bias gives at most
\begin{equation}
P\le14m+4
\label{eq:param-count}
\end{equation}
perturbed slots.  After removing unused variables, $n\le3m$; under any standard convention counting input and output nodes, the instance size remains $s=\Theta(m)$.  All semantic coefficients have constant bit length, the guard gives $B\ge20$, and the threshold $19m/60$ gives $B=O(\log(m+2))$ if thresholds are included in $B$.

Assume the verifier $A$ in \cref{thm:main} exists.  Let
\[
Q(s,B,1/\sigma_\star)
\]
be a polynomial upper bound on its expected running time under the exact perturbation law, including any internal randomness.  The polynomial may be hard-coded into the derived machine because the complexity implication is for the fixed algorithm $A$.  Run $A$ for at most $100Q$ bit operations.  Markov's inequality bounds the timeout probability under the exact smoothing distribution by $1/100$.

A standard probabilistic Turing machine cannot sample an ideal real Gaussian at unit cost.  The dyadic rounding makes the required discrete law efficiently approximable.  For one coefficient, the support is
\[
G_B=\{-2,-2+h,\ldots,2-h,2\},
\qquad |G_B|=2^{B+2}+1=\poly(s).
\]
Each interior atom probability is a difference of normal CDF values,
\begin{equation}
p_q(r)=
\Phi\left(\frac{r+h/2-q}{\sigma_\star}\right)
-
\Phi\left(\frac{r-h/2-q}{\sigma_\star}\right),
\label{eq:atom}
\end{equation}
with the two endpoint atoms absorbing the tails.  The normal CDF is polynomial-time computable to any requested number of bits \citep{ko1991real}.  Approximating each marginal by a dyadic probability table within total variation $\epsilon/P$, and sampling that table exactly, produces a joint distribution within $\epsilon$ of the exact independent product law.  Full normalization and bit-complexity details are in \cref{app:sampling}.

Choose $\epsilon=1/100$.  Total variation cannot increase after applying the verifier, its internal randomness, and the timeout rule.  The resulting polynomial-time randomized SAT algorithm can fail only if:
\begin{enumerate}[leftmargin=1.6em,itemsep=1pt,topsep=2pt]
\item the perturbation changes the reduction label, probability at most $e^{-8}$;
\item the exact verifier exceeds its timeout, probability at most $1/100$; or
\item approximate sampling changes the final event, probability at most $1/100$.
\end{enumerate}
The total is below $1/3$.  Accept exactly when $A$ reports \emph{unsafe}.  A satisfiable source instance maps to an unsafe network, while a no instance maps to a safe network, except on the accounted-for perturbation event.  Thus SAT lies in BPP and $\mathrm{NP}\subseteq\mathrm{BPP}$.

The verifier itself is never allowed to be approximate or wrong: approximation is used only by the outer randomized reduction to sample a distribution close to the exact rounded Gaussian law.

\section{Computational checks and numerical illustration}\label{sec:experiments}

The theorem is analytic, and experiments cannot establish smoothed complexity.  We use computations only to test implementation-sensitive consequences of the proof: the exact continuous-to-Boolean identity, global optimization after perturbation, and the different scaling of extensive and constant gaps.  All raw data, fixed seeds, and scripts are included in the reproducibility package.

\subsection{Exact identity and exhaustively optimized perturbations}

\paragraph{Identity audit.}
We generated 18 random E3SAT formulas with 4--7 variables and 18--47 clauses.  The minimum number $u^\star$ of false clauses was found by exhaustive Boolean enumeration.  We then built the compact networks, containing 72--188 ReLUs, and independently maximized them with exact logical big-$M$ formulations using interval-tight preactivation bounds.  The prediction $(m-u^\star)/3$ agreed with the reported global MILP optimum to maximum absolute error $7.11\times10^{-15}$.

\paragraph{Three-dimensional exhaustive laboratory.}
For perturbation studies we use two eight-clause blocks on three variables.  The \emph{gap block} contains all eight sign patterns and therefore falsifies exactly one clause under every Boolean assignment.  The same-size \emph{satisfiable block} removes the clause false at $000$ and duplicates one clause true at $000$.  Each compact block contains 32 ReLUs and 112 coefficient slots before the shared output bias.

A shallow ReLU network is affine on every cell of its activation-hyperplane arrangement.  In three dimensions, a maximum on a box occurs at an arrangement vertex.  For every perturbed block we enumerated all
\[
\binom{32+6}{3}=8{,}436
\]
triples of activation planes and box facets, discarded singular or infeasible intersections, and evaluated the network at every retained point.  We generated 300 globally optimized blocks of each family at each of ten noise levels from $0.0015$ through $0.063$, with $B=16$.  Forty-four independently selected optima were cross-checked against a separate MILP model; the maximum discrepancy was $7.99\times10^{-15}$.

Disjoint blocks have disjoint inputs and additive outputs, so the product optimum is exactly the sum of the block optima plus one independently perturbed output bias.  We formed 50,000 bootstrap products for each noise/size cell and each
$q\in\{1,2,4,8,16,32,64,128\}$, reaching 4,096 ReLUs and $112q+1=14{,}337$ perturbed slots.

\begin{figure}[t]
\centering
\includegraphics[width=\textwidth]{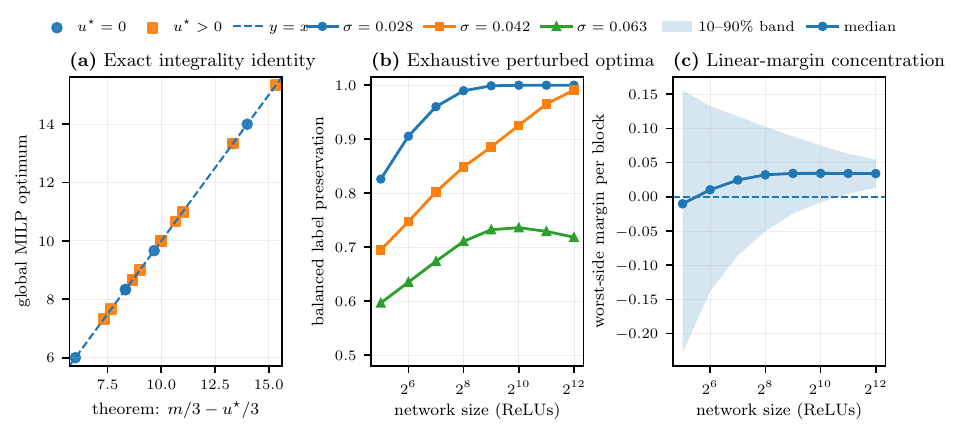}
\caption{\textbf{Exact algebra and exhaustive low-dimensional perturbation study.}
(a) The closed-form optimum in \cref{thm:identity} versus independent global MILP solves.  (b) Balanced preservation of satisfiable and extensively unsatisfiable labels for products of globally optimized three-input blocks.  At $\sigma=0.042$, preservation rises from $0.695$ for one block to $0.991$ for 128 blocks.  At $\sigma=0.063$, systematic side bias prevents convergence to one.  (c) At $\sigma=0.042$, the worst-side margin per block concentrates around a positive constant as the number of blocks grows.  The band is the empirical 10th--90th percentile.  }
\label{fig:exact}
\end{figure}

\Cref{fig:exact}(b)--(c) isolates the extensive-gap phenomenon.  At intermediate noise, one block is unreliable, yet independent products become highly reliable because their signed margin grows and concentrates linearly.  At larger noise, an asymmetric shift in the perturbed optimum can dominate; replication then amplifies the bias.  The theorem avoids relying on cancellation or empirical symmetry by using the deterministic sensitivity certificate.

\subsection{The deterministic certificate exhibits the predicted scaling}

For a realized perturbation, define the exact certificate quantity
\begin{equation}
W(\Delta\theta)=\sum_qc_q|\Delta\theta_q|.
\label{eq:certificate}
\end{equation}
By \cref{lem:sensitivity}, the answer is deterministically preserved whenever $W(\Delta\theta)<\Gamma$.  We compare two block products built from the same parameterized networks:
\begin{itemize}[leftmargin=1.5em,itemsep=1pt,topsep=2pt]
\item in the \emph{extensive-gap} family, all $q$ blocks are gap blocks and $\Gamma=q/6$;
\item in the \emph{one-defect} ablation, only one block carries the logical defect, so $\Gamma=1/6$ while the number of perturbed parameters still grows with $q$.
\end{itemize}

For each $q\in\{1,4,16,64,128\}$ and 37 logarithmically spaced noise levels, we drew 12,000 exact rounded/clipped single-block perturbations at $B=24$ and formed 40,000 bootstrap products.  The extensive family crosses 50\% certificate probability at essentially constant noise, from $1.154\times10^{-3}$ at $q=1$ to $1.132\times10^{-3}$ at $q=128$.  The fitted log--log slope is $-0.0035$.  The one-defect transition falls from $1.154\times10^{-3}$ to $8.49\times10^{-6}$, with slope $-1.014$, matching the predicted $q^{-1}$ law.

\begin{figure}[t]
\centering
\includegraphics[width=\textwidth]{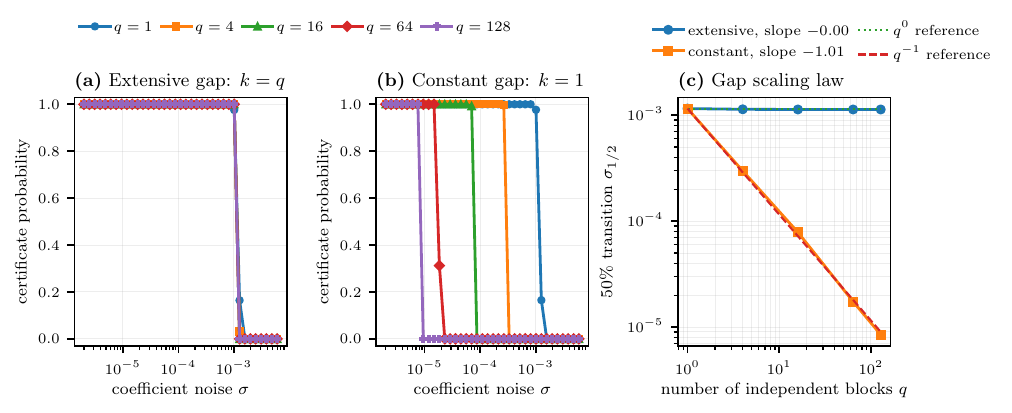}
\caption{\textbf{The extensive verification gap is the fixed-noise mechanism.}
(a) With gap $\Gamma=q/6$, the weighted-certificate transition is essentially independent of the number of perturbed blocks.  (b) With fixed gap $\Gamma=1/6$, admissible noise shrinks as the network grows.  (c) Empirical 50\% transition scales: slope $-0.00$ for the extensive gap and $-1.01$ for the one-defect ablation, compared with $q^0$ and $q^{-1}$ references.}
\label{fig:certificate}
\end{figure}

This ablation illustrates the quantitative condition used by the proof.  In the present construction, fixed noise is possible because the logical margin and aggregate coefficient influence grow at the same extensive scale.

\section{Related work}

\paragraph{Exact ReLU verification.}
Reluplex and related SMT methods \citep{katz2017reluplex,ehlers2017formal}, MILP formulations \citep{tjeng2019mip,anderson2020strong}, and branch-and-bound methods \citep{bunel2020branch,wang2021betacrown} exploit the piecewise-linear structure of ReLU networks while retaining completeness.  Convex relaxations provide scalable bounds but face intrinsic tightness barriers without branching or additional structure \citep{salman2019convex}.  Reachability and verification are NP-hard for highly restricted shallow networks \citep{salzer2021reachability,froese2025complexity}; more recent work gives strong parameterized and approximation hardness for shallow ReLU optimization and verification \citep{froese2025parameterized}.  Worst-case hardness alone does not rule out smoothed polynomial time, because a hard family may be unstable under the prescribed perturbation.

\paragraph{Gap reductions and robustness.}
The present proof should be viewed as a quantitative robustification of gap hardness.  The weighted transfer theorem applies whenever a reduction supplies a sufficiently large verification margin relative to its aggregate coefficient sensitivity.  As noted in \cref{rem:scope}, cut-based shallow-network constructions provide another natural route.  The contribution here is therefore not that the E3SAT gadget is uniquely capable of surviving noise, but that the perturb-and-round model, exact rational correctness, fixed-noise preservation, sampling law, and expected-time conversion are handled together with explicit constants.

\paragraph{Smoothed complexity.}
Smoothed analysis takes a worst case over base inputs and an average over local perturbations \citep{spielman2004smoothed}.  Integer programming demonstrates that smoothing can improve algorithms without erasing all discrete structure \citep{roeglin2007integer}.  In the present setting, the source inapproximability gap is converted into an extensive verification margin and compared directly with weighted coefficient motion.

\paragraph{Parameter smoothing in neural learning.}
\citet{daniely2023smoothness} study computational hardness of learning neural networks under parameter smoothing.  Learning and exact verification differ in their quantifiers, objectives, and correctness requirements.  Our result concerns universal exact verification of a supplied rational network, uses a standard NP-hard gap source, and derives the collapse $\mathrm{NP}\subseteq\mathrm{BPP}$ rather than relying on an average-case conjecture.

\section{Limitations and conclusion}

The theorem concerns worst-case base networks in a specific absolute-parameter-noise model.  It does not predict verifier runtime on networks trained from a particular data distribution, and it does not preclude polynomial algorithms under monotonicity, fixed input dimension, bounded unstable-neuron count, special architectures, or training-induced margins.  The input box and threshold are not perturbed.  The explicit constant $2^{-11}$ is conservative and small relative to the coefficient range, rather than optimized.  Similar conclusions may be obtainable from other gap reductions satisfying the transfer theorem's sensitivity condition.

The computational study uses separable low-dimensional products so that every perturbed block can be globally optimized and independently cross-checked.  It is a diagnostic illustration of the proof mechanism, not a benchmark of modern vision-scale verifiers and not evidence that typical trained networks are difficult after perturbation.

Within these limits, the theorem gives a precise negative result for the stated smoothed model: independent parameter noise and generic activation geometry do not by themselves guarantee smoothed-polynomial exact verification for every adversarial base instance.  The main technical content is the explicit robust-gap accounting, including exact dyadic rounding, clipping, bit-complexity sampling, and expected-time conversion.

\paragraph{Reproducibility.}
The accompanying archive contains the full LaTeX source, fixed random seeds, raw CSV files, figure-generation scripts, proof-audit code, and instructions for rerunning the computations.  The appendices below record the encoding, concentration, sampling, and bit-complexity details.

\section*{Acknowledgments}
The author thanks Prof. Vincent Froese for helpful feedback, in particular for emphasizing the relationship between the smoothed-hardness argument and existing gap reductions for shallow ReLU optimization. Large language-model tools were used substantially for proof exploration, adversarial auditing, experiment design, and manuscript drafting.

\clearpage
\appendix

\section{Full reduction and encoding details}\label{app:reduction}

\subsection{Source promise problem and constant padding}

H\aa stad's E3SAT theorem implies that for every fixed $\epsilon>0$, it is NP-hard to distinguish a satisfiable E3SAT formula from one in which no assignment satisfies more than a $7/8+\epsilon$ fraction of clauses \citep{hastad2001optimal}.  Set $\epsilon=1/40$.  Then $7/8+1/40=9/10$, giving \eqref{eq:eta}.  If the resulting formula has $m_0$ clauses, make 64 disjoint variable-renamed copies.  The new formula has $m=64m_0\ge64$ clauses, remains satisfiable in the yes case, and has the same maximum satisfiable fraction in the no case.  This is a fixed constant blowup.

\subsection{Coefficient expansion}

For a positive literal, \eqref{eq:literal-unit} has incoming weight $1$ and bias $-1/2$; for a negative literal it has weight $-1$ and bias $1/2$.  For a clause with $k$ negative literals,
\[
L_C(x)-\frac13
=\sum_{i\in P_C}\frac{x_i}{3}
-\sum_{i\in N_C}\frac{x_i}{3}
+\frac{k-1}{3}.
\]
If a variable is repeated, equal-variable coefficients are combined and exact zeros are omitted.  The resulting fan-in is at most three, every incoming coefficient lies in $[-1,1]$, and the bias belongs to $\{-1/3,0,1/3,2/3\}$.  The output weights $2/3$ and $-1$ also lie in $[-1,1]$.

The precision guard \eqref{eq:guard} has one weight $2^{-20}$, bias $-1$, and output weight $1$.  Its base preactivation is at most $-1+2^{-20}<0$ on $[0,1]$.

\subsection{Counts}

The semantic construction has $3m$ literal units and $m$ overflow units.  Its parameter slots are bounded by
\begin{center}
\begin{tabular}{lrr}
\toprule
component & hidden units & parameter slots\\
\midrule
literal units & $3m$ & $9m$\\
overflow units & $m$ & at most $5m$\\
precision guard & $1$ & $3$\\
output bias & -- & $1$\\
\midrule
total & $4m+1$ & at most $14m+4$\\
\bottomrule
\end{tabular}
\end{center}
A parameter slot is counted even when its base value is zero.  The instance-size definition counts only nonzero coefficients, so this table is an upper bound for that contribution.  There are at most $3m$ used Boolean variables, and thus the total number of input, hidden, and output nodes plus nonzero coefficients is between constant multiples of $m$.

All semantic coefficients have constant bit length.  The guard gives $B\ge20$.  The threshold $19m/60$ and the formula encoding have $O(\log(m+2))$-bit rational data, so $B=O(\log(m+2))$ under the broad convention that includes the threshold and box endpoints.  Under a network-only convention, $B$ is a fixed constant at least 20.  Both conventions make $2^{-B}\le2^{-20}<\sigma_\star$ and make the rounded support polynomial in the input length.

\subsection{Strict unsafe and weak safe inequalities}

In the satisfiable case the Boolean witness has base output $\tau+\Gamma$.  On the event
\[
\sup_x|g_{\wt\theta}(x)-g_\theta(x)|<\Gamma,
\]
that same witness remains strictly above $\tau$, as required for an unsafe answer.  In the no case every base output is at most $\tau-\Gamma$, so every perturbed output is strictly below $\tau$.  The safe weak inequality therefore holds.  The concentration proof treats every perturbation with uniform displacement at least $\Gamma$, including equality, as part of the failure event.  No correctness claim therefore depends on a tie convention at the output threshold.

\section{Full robustness calculations}\label{app:robustness}

\subsection{Derivation of the influence norms}

For each literal unit there is one incoming weight, one hidden bias, and one output weight.  The incoming parameters have influence 2 and the output parameter has influence $1/2$.  Three literal units per clause therefore contribute
\[
3\left(2+2+\frac12\right)=\frac{27}{2}
\]
to $C$, and
\[
3\left(2^2+2^2+\left(\frac12\right)^2\right)=\frac{99}{4}
\]
to $D^2$.

An overflow unit has at most three incoming weights, one hidden bias, and one output weight.  It contributes at most
\[
4\cdot2+\frac23=\frac{26}{3}
\]
to $C$, and
\[
4\cdot2^2+\left(\frac23\right)^2=\frac{148}{9}
\]
to $D^2$.  Thus the semantic per-clause totals are
\[
\frac{27}{2}+\frac{26}{3}=\frac{133}{6},
\qquad
\frac{99}{4}+\frac{148}{9}=\frac{1483}{36}.
\]
The guard's incoming weight and bias contribute 4 to $C$ and 8 to $D^2$; its output weight has influence zero because the base activation is identically zero.  The output bias contributes 1 to both $C$ and $D^2$.  This gives the first inequalities in \eqref{eq:CD-bounds}; $m\ge64$ yields the simplified bounds.

\subsection{Clipping and rounding}

Let $R_h(t)=h\operatorname{round}(t/h)$.  Every nearest-grid rule satisfies $|R_h(t)-t|\le h/2$.  Since $q\in[-1,1]\subset[-2,2]$ and Euclidean projection onto an interval is nonexpansive relative to a point in that interval,
\[
|\Pi_{[-2,2]}(R_h(q+\sigma Z))-q|
\le |R_h(q+\sigma Z)-q|
\le \sigma|Z|+\frac h2.
\]
This remains true when the noisy rounded value lies outside the clipping interval.

\subsection{Gaussian concentration constants}

The map $F(z)=\sum_qc_q|z_q|$ is $D$-Lipschitz because
\[
|F(z)-F(z')|
\le\sum_qc_q|z_q-z'_q|
\le D\|z-z'\|_2.
\]
Its expectation is $\mu C$, where $\mu=\sqrt{2/\pi}<5/6$.  The standard Gaussian concentration inequality for Lipschitz functions gives
\[
\PP\{F\ge\mu C+t\}\le e^{-t^2/(2D^2)}.
\]
At $\Gamma=m/60$ and $\sigma_\star=2^{-11}$, and using only $h\le\sigma_\star$, the rational lower bound \eqref{eq:lambda-lower} follows.  The exact Gaussian mean would give the stronger per-clause tail separation $4.2819\ldots$ and exponent $0.2132\ldots m$; the paper uses the rational lower bounds $52m/15$ and $m/8$.

\section{Polynomial-bit sampling of the rounded Gaussian law}\label{app:sampling}

\subsection{One marginal}

Fix a base coefficient $q$, precision $B$, $h=2^{-B}$, and fixed rational $\sigma_\star$.  The support after rounding and clipping is
\[
G_B=\{-2,-2+h,\ldots,2-h,2\},
\qquad M=|G_B|=2^{B+2}+1.
\]
For an interior point $r$, the probability is \eqref{eq:atom}; the endpoint probabilities aggregate the corresponding half-infinite tails.  Every CDF argument is rational with $O(B)$ bits and lies in a fixed bounded interval depending only on $\sigma_\star$.  The standard normal CDF is polynomial-time computable as a real function \citep{ko1991real}; equivalently, one may combine polynomial-time evaluation of $e^{-x^2/2}$ on a fixed interval with quadrature or power-series approximation and elementary tail bounds.  Thus each atom can be approximated to $R$ binary digits in time $\poly(B,R)$.

Let the desired marginal total-variation error be $\varepsilon/P$.  Approximate each exact probability $p_i$ by a rational $a_i$ satisfying
\[
|a_i-p_i|\le\delta:=\frac{\varepsilon}{32PM},
\]
and replace negative approximations by zero.  The resulting vector has $\ell_1$ error at most $M\delta$, and its sum differs from one by at most the same amount.  Normalizing it increases the $\ell_1$ error by only a constant factor.  Next choose a dyadic denominator $2^R$ with
\[
2^R\ge\frac{16PM}{\varepsilon},
\]
round the first $M-1$ normalized probabilities downward to multiples of $2^{-R}$, and place the leftover mass on the last atom.  This changes the $\ell_1$ distance by at most $2M2^{-R}$.  With the displayed choices, the final dyadic marginal lies within $\varepsilon/P$ total variation of the exact marginal.  It can be sampled exactly using fair random bits and a cumulative table.  All stored integers and all arithmetic have polynomial bit length.

\subsection{Independent products}

Let $D_q$ and $\widehat D_q$ be the exact and approximate marginal laws.  A coordinatewise maximal coupling gives
\[
\TV\left(\bigotimes_{q=1}^P D_q,
          \bigotimes_{q=1}^P\widehat D_q\right)
\le\sum_{q=1}^P\TV(D_q,\widehat D_q)
\le\varepsilon.
\]
Any randomized verifier together with a timeout rule is a Markov kernel.  Total variation contracts under Markov kernels, so replacing the exact perturbation vector by the approximate sampler changes the probability of every final event by at most $\varepsilon$.

\subsection{Expected time and the hard-coded polynomial}

The smoothed polynomial bound for a fixed algorithm $A$ means that some fixed polynomial $Q$ upper-bounds its expectation.  A machine proving the complexity implication may hard-code $Q$ and run $A$ for $100Q$ steps.  This does not require discovering $Q$ from black-box access to $A$.  Markov's inequality then gives the $1/100$ timeout bound used in \cref{sec:bpp}.

\section{Experimental methodology}\label{app:experiments}

\subsection{MILP formulation}

For each hidden unit $y=\relu(z)$ with exact interval bounds $L\le z\le U$ on the box, stable units are represented by $y=z$ if $L\ge0$ and $y=0$ if $U\le0$.  For $L<0<U$, with binary phase variable $a$, we use
\[
y\ge0,
\qquad y\ge z,
\qquad y\le Ua,
\qquad y\le z-L(1-a).
\]
Because the network has one hidden layer, $L$ and $U$ are obtained exactly by selecting the appropriate endpoint of each input coefficient.  The models were solved through SciPy's MILP interface \citep{virtanen2020scipy}, backed by HiGHS; relevant dual-simplex implementation details are described by \citet{huangfu2018parallelizing}.  We required reported optimality and a relative MIP gap of $10^{-10}$.

\subsection{Arrangement enumeration}

\begin{lemma}[Three-dimensional arrangement vertex lemma]\label{lem:arrangement}
Let $g$ be a shallow ReLU network on a three-dimensional box.  A global maximum occurs at a box corner or at the intersection of three linearly independent planes chosen from the hidden activation planes and box facets.
\end{lemma}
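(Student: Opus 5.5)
The plan is to use the piecewise-affine structure of $g$ and the standard fact that a linear function on a polytope attains its maximum at a vertex. Write $g(x)=c_0+\sum_v a_v\relu(w_v^\top x+b_v)$ on the box $X=\prod_{i=1}^3[\alpha_i,\beta_i]$. For each hidden unit $v$ let $H_v=\{x:w_v^\top x+b_v=0\}$; if $w_v=0$ the unit is constant and contributes no plane, so it can be dropped. For each sign vector $s\in\{+,-\}^{V}$, define the closed cell
\[
K_s=X\cap\bigcap_{v:\,s_v=+}\{w_v^\top x+b_v\ge0\}\cap\bigcap_{v:\,s_v=-}\{w_v^\top x+b_v\le0\}.
\]
Every $x\in X$ lies in some $K_s$, namely the one given by its activation pattern with ties broken arbitrarily. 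On $K_s$, $g$ agrees with the affine function $g_s(x)=c_0+\sum_{s_v=+}a_v(w_v^\top x+b_v)$. This holds including on the boundary planes, because $\relu(z)=z$ for $z\ge0$ and $\relu(z)=0$ for $z\le0$.

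The steps are as follows. (i) $X$ is compact and $g$ is continuous, so a global maximizer $x^\star$ exists. Pick a cell $K_s\ni x^\star$; then $\max_{K_s}g_s=g(x^\star)$. (ii) Each nonempty $K_s$ is a bounded polyhedron, since it is contained in $X$, so it is a polytope. Its defining inequalities come from the six box facets and the planes $H_v$. (iii) A linear function on a nonempty polytope attains its maximum at a vertex $p$ of that polytope. Every vertex of a polytope in $\RR^3$ is the unique solution of three linearly independent active constraints. So $p$ lies on three linearly independent planes chosen from the activation planes and box facets. Moreover $g(p)=g_s(p)=g(x^\star)$ because $p\in K_s$. (iv) A box corner is the special case in which the three planes are box facets; this justifies the disjunction in the statement, which is slightly redundant.

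The main point needing care is step (iii). I must make sure $K_s$ is pointed, that is, that it has vertices at all. This holds because $K_s\subseteq X$ is bounded and the box constraints alone already have rank three, so the polytope has at least one extreme point. I must also handle degenerate situations: parallel or coincident activation planes, units with $w_v=0$, and vertices where more than three constraints are active. I would resolve the last by selecting any three linearly independent active constraints, which exist by the rank characterization of extreme points. Coincident planes are harmless because the statement only asks for some independent triple. Finally, I would remark that this justifies the enumeration over $\binom{|V|+6}{3}$ triples: discarding singular or infeasible triples loses no maximizer, since the maximizing vertex $p$ always arises from a nonsingular triple and lies in $X$.
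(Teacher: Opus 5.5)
Your proposal is correct and follows essentially the same route as the paper: cover the box by closed polyhedral cells on which $g$ is affine, take a vertex maximizer of the cell containing a global maximizer, and note that every vertex in $\RR^3$ is cut out by three linearly independent active planes. Your treatment of pointedness, ties on activation planes, and units with $w_v=0$ makes explicit the details that the paper's terse proof leaves implicit.
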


\begin{proof}
The activation planes and box facets partition the box into finitely many compact polyhedral cells.  On each cell, every ReLU phase is fixed and $g$ is affine.  An affine function on a compact polytope attains a maximum at a vertex.  Every vertex in three dimensions is supported by three linearly independent active planes after including the planes defining any lower-dimensional cell and the box facets.
\end{proof}

For each 32-ReLU block, the solver enumerated all $\binom{38}{3}=8{,}436$ plane triples by vectorized Cramer's rule, discarded singular systems and points outside $[0,1]^3$, and evaluated the network at all remaining candidates.  Floating-point arithmetic is used only in this numerical study.  The independent MILP cross-check covered 44 perturbed networks; maximum and mean absolute discrepancies were $7.99\times10^{-15}$ and $3.63\times10^{-16}$.

\subsection{Product construction}

If blocks use disjoint inputs and their outputs are added, then
\[
\max_{x^{(1)},\ldots,x^{(q)}}\sum_{j=1}^qg_j(x^{(j)})
=\sum_{j=1}^q\max_{x^{(j)}}g_j(x^{(j)}).
\]
For the eight-clause satisfiable block, the base optimum is $8/3$; for the gap block it is $7/3$.  The midpoint threshold for $q$ homogeneous blocks is
\[
\tau_q=\frac{8q}{3}-\frac q6,
\]
leaving margin $q/6$ on both sides.  The shared output bias is perturbed once per product.  The bootstrap experiment used 300 exact block optima per family/noise cell and 50,000 products per product cell.

\subsection{Certificate experiment}

For one compact block, the vector of semantic parameters contains 112 slots.  Each realized perturbation was scored with the exact weighted sum in \eqref{eq:certificate}; the output-bias displacement was sampled separately.  At each of 37 noise levels, 12,000 single-block weighted sums were generated and then combined into 40,000 products for each $q$.  The 50\% transition was estimated by linear interpolation of certificate probability against $\log\sigma$.  Least-squares fits of $\log\sigma_{1/2}$ against $\log q$ gave slopes $-0.003496$ and $-1.013715$ for extensive and one-defect gaps.

\subsection{Figure typography}

All plots were generated with Matplotlib's PGF backend, $\texttt{text.usetex=True}$, and a LaTeX preamble containing \texttt{amsmath} and \texttt{amssymb}.  Legends were anchored above or outside their axes, so no legend or annotation obscures a data curve.  PDF and PGF versions are supplied.

\section{Proof-audit checklist}\label{app:audit}

The following items address common failure modes in smoothed-hardness arguments.

\begin{enumerate}[leftmargin=1.65em,itemsep=3pt]
\item \textbf{Smoothed adversarial, not random-instance, hardness.}  The formula and base network are worst-case.  Probability is only over independent coefficient perturbations.
\item \textbf{Fractional inputs.}  \Cref{thm:identity} proves a global rounding inequality on the entire box; no Boolean-input restriction is assumed.
\item \textbf{Exact correctness.}  The hypothetical verifier is exact on every rational input.  Approximation is used only to implement the outer reduction's sampling law in total variation.
\item \textbf{ReLU boundaries.}  The sensitivity proof uses only global 1-Lipschitzness and is valid at preactivation zero.
\item \textbf{No activation independence.}  No step treats ReLU phases or activation margins as independent random variables.
\item \textbf{Aggregate perturbation.}  The proof controls a weighted sum over all coefficients; it does not replace the aggregate by a single-coordinate anti-concentration claim.
\item \textbf{Clipping and rounding.}  Both are included explicitly in \eqref{eq:scalar-drift}.  Clipping is nonexpansive relative to every base coefficient in $[-1,1]$.
\item \textbf{Output perturbations.}  Hidden-to-output weights and the output bias are perturbed and counted.  The precision-guard output weight is also perturbed, although its first-order influence coefficient is zero because the base activation is zero.
\item \textbf{Strict witness.}  On the preservation event, the satisfiable witness remains strictly above the unchanged threshold.
\item \textbf{Fixed admissible noise.}  The guard forces $B\ge20$ at every size, so $2^{-B}<2^{-11}$ without asymptotic or finite-prefix assumptions.
\item \textbf{Expected versus bounded-error time.}  Markov truncation is applied to the joint expectation over the perturbation and verifier randomness.  Total variation is applied only after the timeout makes the final event bounded.
\item \textbf{Polynomial sampling.}  The rounded marginal support has polynomial cardinality because $B=O(\log s)$; Gaussian atom probabilities are approximated and normalized in polynomial bit time.
\end{enumerate}

The supplied \texttt{proof\_checks.py} performs additional redundant diagnostics: 64 random exact-identity checks (including repeated and complementary literals), 32 global sensitivity-certificate checks in which the true sup-norm difference is itself solved by MILP, and exact rational verification of every numerical constant.  The current audit reports maximum identity error $9.77\times10^{-15}$, positive sensitivity slack on every test, and status \texttt{PASS}.  These computations are diagnostics, not substitutes for the proofs.

\end{document}